\newtheorem{theorem}{Theorem}
\newtheorem{corollary}{Corollary}
\newtheorem{proposition}{Proposition}
\newtheorem{remark}{Remark}
\newcommand{\G}{\mathcal{G}}
\begin{document}

\title{{Superadditivity of Zero-Error Capacity in Noisy Classical and Perfect Quantum Channel Pairs}}


\author{Ambuj}
\affiliation{Indian Institute of Technology, Jodhpur-342030, India}

\author{Anushko Chattopadhyay}
\affiliation{Indian Institute of Technology, Jodhpur-342030, India}

\author{Kunika Agarwal}
\affiliation{Department of Physics of Complex Systems, S. N. Bose National Center for Basic Sciences, Block JD, Sector III, Salt Lake, Kolkata 700106, India.}

\author{Rakesh Das}
\affiliation{Indian Institute of Technology, Jodhpur-342030, India}

\author{Amit Mukherjee}
\affiliation{Indian Institute of Technology, Jodhpur-342030, India}

\begin{abstract}
We demonstrate superadditivity of one-shot zero-error classical capacity in an asymmetric communication setting where a noisy classical channel is used in parallel with a perfect quantum channel. Each channel individually supports only a fixed number of perfectly distinguishable messages. Their joint use enables transmission of strictly more messages than permitted by the product of the individual capacities. We present explicit constructions achieving this enhancement and establish that replacing the perfect quantum channel with a perfect classical channel eliminates the effect. Finally, we identify a structural criterion on the noisy channel governing this effect and show that the quantum advantage is rooted in Kochen–Specker contextuality.
\end{abstract}

\maketitle


\textit{Introduction.-}
Can the combined performance of two imperfect resources exceed the sum of what each can achieve individually. The origin of this philosophy traces back to Aristotle \cite{AristotleMetaphysics}. In information theory, this question takes a more concrete form in the study of communication channels, where performance is quantified by channel capacity \cite{Shannon1948}. When two noisy channels are used together, their joint capacity is generally expected to be equal to their sum of individual capacities. Remarkably, quantum information theory violates this intuition. Certain pairs of noisy quantum channels exhibit superadditivity, whereby their combined capacity exceeds the sum of the capacities achievable in isolation. A quantum channel supports the transmission of quantum information, private information, or classical information, leading to distinct notions of capacity \cite{Schumacher1997,Holevo1998,Lloyd1997,Devetak2005,Shor2002}. Superadditivity has been rigorously established for quantum and private capacities \cite{Li2009,Smith2009,Smith2008,Smith2011}, and indications of a similar phenomenon for classical capacity have also been reported \cite{Hastings2009}. 

Zero-error information theory, introduced by Shannon \cite{Shannon1956}, addresses communication under the stringent requirement of absolute certainty, allowing no probability of error. Within this setup, interestingly even noisy classical channels can display superadditive behavior \cite{Shannon1956,Lovasz1979,Alon1998,Korner1998}. More recently,  quantum extension of zero-error information theory has emerged as an active area of research \cite{Cubitt2010,Duan2009,Cubitt2011a,Cubitt2012,Shirokov2015,Duan2016,Cubitt2011,Chen2010,Beigi2010,Chen2011,Duan2013,Wang2018,Prevedel2011,Yadavalli2022,Mir2023,Agarwal2024,Cubitt2012,Shirokov2015}. In this work, we present a previously unexplored instance of superadditivity phenomenon having a quantum origin. Specifically, we show that a noisy classical channel, when paired in parallel with a perfect quantum channel, exhibits superadditive capacity. In other words, the capacity of the joint noisy classical and perfect quantum channels supersedes sum of their individual capacities. We also establish that this enhancement is fundamentally quantum, \textit{i.e.}, replacing the perfect quantum channel with a perfect classical channel eliminates the superadditive advantage. Along the way, we derive general criteria for exhibiting such superadditive behavior, thereby expanding the conceptual landscape of zero-error communication theory and connecting it to the foundational notion of Kochen Specker contextuality \cite{Kochen1975,KSRMP}. 

\textit{Framework.-}
A discrete memoryless classical channel $N$ is specified by a finite input alphabet $\mathcal{X}$, a finite output alphabet $\mathcal{Y}$, and a collection of conditional probabilities $P(y|x)$, where $\sum_{y \in \mathcal{Y}} P(y|x) = 1$ for all $x \in \mathcal{X}$, and $y \in \mathcal{Y}$.
For a fixed input symbol $x \in \mathcal{X}$, only a subset of output symbols can occur with nonzero probability \cite{Shannon1948}.
We denote this support of input by
$\Gamma(x) := \{ y \in \mathcal{Y} \mid P(y|x) > 0 \}$,
Operationally, $\Gamma(x)$ represents all outputs that the receiver may observe when $x$ is transmitted in a single use of the channel.

Two distinct input symbols $x,x' \in \mathcal{X}$ are said to be \emph{confusable} in the single use of the channel if their supports in $\mathcal{Y}$ overlap, that is, if $\Gamma(x) \cap \Gamma(x') \neq \varnothing$ \cite{Shannon1956}.
In this case, there exists at least one output that could have arisen from either input, and therefore the receiver cannot infer with certainty which input was sent.
Conversely, two input symbols are perfectly distinguishable in a single channel use if their supports are disjoint.

A \emph{one-shot zero-error code} for the channel is a subset $\mathcal{C} \subseteq \mathcal{X}$ such that every pair of distinct input symbols in $\mathcal{C}$ is perfectly distinguishable by the receiver.
Equivalently, for all $x \neq x'$ in $\mathcal{C}$, we require $\Gamma(x) \cap \Gamma(x') = \varnothing$.
The size of such a code, $|\mathcal{C}|$ represents the number of classical messages that can be transmitted with absolute certainty by invoking single use of the channel.

The confusability relations among input symbols admit a natural graph-theoretic representation.
The \emph{confusability graph} $G(N)$ of the channel has vertex set $V(G)$ which coincides with input set $\mathcal{X}$.
Two distinct vertices are connected by an edge if and only if the corresponding input symbols are confusable and the set of all edges is denoted by $E(G)$.
By construction, a one-shot zero-error code $\mathcal{C}$ corresponds precisely to a set of vertices with no edges between them.
In graph-theoretic terms, a code $\mathcal{C}$ is an independent set of the graph $G(N)$.
The {one-shot zero-error classical capacity }$\mathfrak{C}_0(N)$ of the channel is therefore given by the independence number $\alpha(G(N))$, defined as the \textit{maximum} cardinality of an independent set of $G(N)$.
Although capacity is often defined as the logarithm of this quantity, throughout this work we identify the one-shot zero-error capacity with $\alpha(G(N))$ itself, corresponding directly to the maximum number of messages that can be transmitted with certainty in a single channel use. As a simple example, a perfect classical channel with $d$ input symbols has an edgeless confusability graph ($\overline{K}_d$) and hence $\mathfrak{C}_0(\overline{K}_d)=\alpha(\overline{K}_d)=d$.

When two classical channels $N$ and $N'$ are used in parallel, denoted as $N \otimes N'$, the receiver receives both outputs simultaneously. Two input pairs are confusable if the receiver cannot distinguish at least one component of the pair in the corresponding channel perfectly.
This joint confusability graph corresponds precisely to the {strong product} of the individual confusability graphs, denoted $G(N) \boxtimes G(N')$. Two distinct vertices $(u,i)$ and $(v,j)$ in $G(N) \boxtimes G(N')$ are adjacent if and only if at least one of the following holds:
(i) $u = v$ and $i$ is adjacent to $j$ in $G(N')$; 
(ii) $u$ is adjacent to $v$ in $G(N)$ and $i = j$; 
(iii) $u$ is adjacent to $v$ in $G(N)$ and $i$ is adjacent to $j$ in $G(N')$.
In this framework, a one-shot zero-error code for the parallel channel corresponds to an independent set in the strong product graph $G(N) \boxtimes G(N')$. Consequently, the joint one-shot zero-error capacity is given by
$\mathfrak{C}_0(N \otimes N') = \alpha\bigl(G(N) \boxtimes G(N')\bigr)$.

A noisy classical channel $N$ can equivalently be represented by its {channel hypergraph}, denoted as $H(N)$. A vertex set of $H(N)$ is the input alphabet $\mathcal{X}$ of the channel. The output symbol $y \in \mathcal{Y}$, is a hyperedge $h_y$ which consists of vertex subset $\{x\in\mathcal{X} \mid P(y|x)>0\}\subseteq\mathcal{X}$.
Two inputs are confusable if and only if they belong to a common hyperedge.
The confusability graph $G(N)$ is obtained from $H(N)$ by connecting each pair of vertices with an edge that appear together in at least one hyperedge.

We now restrict our attention to the parallel composition of two classical channels: one noisy channel $N_n$, and one perfect channel $N_p$.
Our goal is to characterize the one-shot zero-error capacity of the joint channel.
In this case, the resulting capacity is simply the product of the individual capacities \cite{Shannon1956}.
We include a brief proof of this multiplicativity for completeness. It establishes the classical benchmark from which a departure is later observed when a perfect quantum channel is used in place of a perfect classical one.

\begin{proposition}[\cite{Shannon1956}]
Let $N_n$ be a noisy classical channel with confusability graph $G_n$ and one-shot zero-error capacity $\mathfrak{C}_0(N_n)=\alpha(G_n)$, and let $N_p$ be a $d$-level perfect classical channel with confusability graph $G_p$; then the one-shot zero-error capacity of their parallel composition $N_n \otimes N_p$ is $\mathfrak{C}_0(N_n\otimes N_p)=\alpha(G_n \boxtimes G_p) = d\, \alpha(G_n)$.
\end{proposition}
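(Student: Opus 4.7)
The plan is to prove $\alpha(G_n \boxtimes G_p) = d\,\alpha(G_n)$ by establishing the two matching inequalities, exploiting the fact that $G_p=\overline{K}_d$ is edgeless so the strong-product adjacency rules collapse to a very simple form.

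\textbf{Lower bound.} First I would pick a maximum independent set $I \subseteq V(G_n)$ with $|I|=\alpha(G_n)$ and form the ``cylinder'' $I \times V(G_p) \subseteq V(G_n \boxtimes G_p)$. I would then verify that this set is independent in the strong product by checking the three adjacency conditions for a pair of distinct vertices $(u,i),(v,j)$ inside it. Since $G_p$ has no edges, conditions (i) and (iii) are immediately ruled out; condition (ii) would require $u \sim v$ in $G_n$ with $i=j$, but $u,v \in I$ forces $u \not\sim v$. Hence $|I \times V(G_p)| = d\,\alpha(G_n)$ is achievable, giving $\alpha(G_n \boxtimes G_p)\geq d\,\alpha(G_n)$.

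\textbf{Upper bound.} For the reverse inequality, I would take any independent set $S \subseteq V(G_n\boxtimes G_p)$ and slice it by the second coordinate, defining $S_i := \{u \in V(G_n) \mid (u,i) \in S\}$ for each $i \in V(G_p)$. The key observation is that each $S_i$ must be independent in $G_n$: if $u \neq v$ both lie in $S_i$, then $(u,i)$ and $(v,i)$ share the same second coordinate, so the only way they could be adjacent in the strong product is via condition (ii), i.e.\ $u \sim v$ in $G_n$; independence of $S$ therefore forces $u \not\sim v$. Consequently $|S_i|\leq \alpha(G_n)$, and summing over the $d$ values of $i$ yields $|S|=\sum_i |S_i| \leq d\,\alpha(G_n)$.

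Combining both directions gives $\alpha(G_n \boxtimes G_p)= d\,\alpha(G_n)$, i.e.\ $\mathfrak{C}_0(N_n \otimes N_p)= d\,\mathfrak{C}_0(N_n)$. There is no real obstacle here — the argument is essentially bookkeeping once one notes that an edgeless second factor trivializes two of the three strong-product adjacency rules. The only thing to be careful about is the correct handling of the ``$u=v, i\neq j$'' case in both directions, which is automatic because such pairs are non-adjacent in $G_n\boxtimes \overline{K}_d$.
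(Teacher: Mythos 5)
Your proposal is correct and rests on exactly the same observation as the paper's proof: since $G_p=\overline{K}_d$ is edgeless, adjacency in $G_n\boxtimes G_p$ reduces to the case $i=j$ with $u\sim v$ in $G_n$. The paper phrases this as a decomposition of $G_n\boxtimes\overline{K}_d$ into $d$ disjoint copies of $G_n$ and invokes additivity of the independence number over disjoint unions, while your two-inequality argument (cylinder construction plus slicing by the second coordinate) simply unpacks that same additivity step explicitly.
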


\begin{proof}
The confusability graph of the perfect classical channel $N_p$ is the edgeless graph $G_p=\overline{K}_d$, and hence its capacity is $\alpha(G_p)=d$.
We now analyze the confusability graph of the parallel composition $N_n\otimes N_p$.

The joint channel's confusability graph is the strong product of $G_n$ and $G_p$, that is $G_n \boxtimes G_p$, and the vertex set is $V(G_n) \times V(G_p)$.
Since $G_p$ is edgeless, the edge set $E(G_p)$ is empty. Hence, adjacency in $G_n \boxtimes G_p$ reduces to the condition: $(u,i) \sim (v,j)$ if and only if  $i = j$ and  $u \sim v$ in $G_n$.
It follows that $G_n \boxtimes G_p$ decomposes into $d$ disconnected components, one for each $i \in V(G_p)$, and each component is isomorphic to $G_n$ through the map $(u,i) \mapsto u$. Therefore,
$G_n \boxtimes G_p = \bigsqcup_{i=1}^{d} G_n^i$, where each $G^i_n \cong G_n$.

An independent set in a disjoint union of graphs is obtained by taking independent sets in each component. Thus, the maximum size of an independent set in $G_n \boxtimes G_p$ is the sum of the independence numbers of its components:
$\alpha(G_n \boxtimes G_p) = \sum_{i=1}^{d} \alpha(G_n^i) = d\,\alpha(G_n)$.
This proves the claim.
\end{proof}

The above proposition establishes that supplementing a noisy classical channel $N_n$ with a perfect classical channel does not give rise to any superadditivity of the one-shot zero-error capacity. This naturally raises the question of whether this limitation persists when the assisting perfect channel is quantum rather than classical.

\textit{Main Results.-} A perfect quantum channel can transmit the same number of classical messages with zero-error as a perfect classical channel of the same level \cite{Schumacher1997,Holevo1998}. Despite this equivalence at the level of individual channel capacities, we show that replacing the perfect classical channel with a perfect quantum channel allows the one-shot zero-error capacity of the joint channel to strictly exceed the classical bound $ \alpha(G(N_n)) $ multiplied by the capacity of the perfect channel. This superadditive phenomenon is intrinsically quantum in nature as no such superadditivity arises when the assisting channel is classical. We provide a constructive proof of this superadditivity in what follows.

\begin{theorem}\label{scheme}
There exist a noisy classical channel $N_n$ and a perfect quantum channel $Q_p$ such that, in a single parallel use of $N_n$ and $Q_p$, the one-shot zero-error capacity strictly exceeds the product of the individual one-shot zero-error capacities of $N_n$ and $Q_p$.
\end{theorem}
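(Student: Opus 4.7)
The plan is to exhibit an explicit pair $(N_n,Q_p)$ in which the noisy classical channel $N_n$ is built from a Kochen--Specker (KS) set in $\mathbb{C}^d$ and $Q_p$ is the $d$-dimensional perfect quantum channel, so that the KS vectors themselves serve as the quantum labels that render all $N$ noisy inputs jointly distinguishable. Recall that a KS configuration consists of unit vectors $\{|v_i\rangle\}_{i=1}^N\subset\mathbb{C}^d$ together with orthonormal bases $B_1,\dots,B_M$ drawn from these vectors such that no $\{0,1\}$-valued function on the vectors can take the value $1$ exactly once in every basis.

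First I would define $N_n$ by taking input alphabet $\{x_1,\dots,x_N\}$, output alphabet $\{y_1,\dots,y_M\}$, and transition probabilities $P(y_k|x_i)>0$ iff $|v_i\rangle\in B_k$, with arbitrary positive weights normalised so that $\sum_k P(y_k|x_i)=1$. By construction the channel hypergraph $H(N_n)$ has hyperedges $h_{y_k}=\{x_i:|v_i\rangle\in B_k\}$, and its confusability graph $G_n$ coincides with the KS orthogonality graph, so Proposition~1 pins down the classical benchmark $\mathfrak{C}_0(N_n)\cdot\mathfrak{C}_0(Q_p)=d\,\alpha(G_n)$. Second, I would propose the quantum-assisted code $\mathcal{C}=\{(x_i,|v_i\rangle):i=1,\dots,N\}$ of size $N$, where the perfect quantum channel $Q_p$ is used to transmit the KS state $|v_i\rangle$ alongside the classical input $x_i$.

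To verify that $\mathcal{C}$ is a valid one-shot zero-error code, observe that receiving output $y_k$ from $N_n$ narrows the possible input to $\{x_i:|v_i\rangle\in B_k\}$, and a projective measurement on the quantum register in the orthonormal basis $B_k$ then deterministically identifies $|v_i\rangle$. Consequently, all $N$ codewords are jointly distinguishable in a single parallel use of $N_n\otimes Q_p$, giving $\mathfrak{C}_0(N_n\otimes Q_p)\geq N$.

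The main obstacle is the strict inequality $N>d\,\alpha(G_n)$; without it the quantum code fails to beat the classical benchmark. The KS property readily yields $\chi(G_n)>d$ (a $d$-coloring would split the vectors into $d$ independent sets each meeting every basis exactly once, producing a valid KS $\{0,1\}$-assignment and contradicting contextuality), but passing from this chromatic statement to the required bound on $\alpha$ traverses the subtle gap between $\chi(G_n)$ and $N/\alpha(G_n)$. Rather than pursuing a general implication, I would fall back on a concrete KS configuration, e.g.\ the 18-vector set of Cabello, Estebaranz and Garcia-Alcaine in $\mathbb{C}^4$, and verify $\alpha(G_n)<N/d$ directly by enumerating independent sets, yielding a quantitative superadditive gap that proves the existential claim.
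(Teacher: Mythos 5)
Your proposal is correct and follows essentially the same route as the paper: the paper also constructs $N_n$ from the Cabello--Estebaranz--Garc\'ia-Alcaine $18$-vector KS set in $\mathbb{C}^4$ (hyperedges = orthonormal bases), sends the KS state $\ket{v_i}$ through the perfect $4$-level channel, and decodes by measuring in the basis indicated by the classical output, obtaining $18>4\times 4$. Your honest observation that KS-uncolorability alone does not give $N>d\,\alpha(G_n)$, so that the bound $\alpha(G_n)=4$ must be checked for the concrete instance (a routine finite verification, which the paper likewise simply asserts), is consistent with the paper's treatment and with its later remark that KS-uncolorability is not the operative criterion.
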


\begin{proof}
We prove the claim by explicit construction.
Let $N_n$ be a discrete memoryless classical channel whose hypergraph $H(N_n)$ is shown in Fig.~\ref{fig:c18}. The input (vertex) set is $\mathcal{X}=\{v_1,\dots,v_{18}\}$, and the output set $\mathcal{Y}$ consists of hyperedges $h_y$. Each hyperedge $h_y$ contains four inputs, which form a maximum clique in the confusability graph $G(N_n)$ of $N_n$. The graph $G(N_n)$ coincides with the orthogonality graph of the $18$-vector construction introduced by Cabello \emph{et al.} \cite{CABELLO1996183}, which exhibits Kochen Specker contextuality. The independence number of $G(N_n)$ is $\alpha(G(N_n)) = 4$, and therefore the one-shot zero-error classical capacity of $N_n$ satisfies $\mathfrak{C}_0(N_n)=4$.

Now we augment $N_n$ by forward-assisting it with a perfect $4$-level quantum channel $Q_p$, whose one-shot zero-error classical capacity is also $4$. We describe a joint encoding–decoding scheme for the parallel use of $N_n$ and $Q_p$. Let the message set be $\mathcal{M}_m:=\{m_1,\dots,m_{18}\}$.

\textit{Encoding.} Fix a bijection $\mathscr{E}:\mathcal{M}_m\to V(\mathcal{G}_m)$. To transmit a message $m_k\in\mathcal{M}_m$, let $v_k=\mathscr{E}(m_k)$ be the corresponding classical label; the sender sends this classical label through the noisy classical channel $N_n$. Simultaneously, the sender transmits a quantum system encoded in the (normalized) state
$\ket{\psi_k}=a_k\ket{0}+b_k\ket{1}+c_k\ket{2}+d_k\ket{3}\in\mathbb{C}^4$,
through $Q_p$, which corresponds to the vertex $v_k=(a_k,b_k,c_k,d_k)$ (See Fig. \ref{fig:c18}).

\textit{Decoding.} Upon reception, the receiver obtains a hyperedge $h_y$ (the classical channel output) together with the quantum state $\ket{\psi_k}$ (the quantum channel output). However, the state is unknown to the receiver. By construction, for any hyperedge $h_y$ of $H(N_n)$ the set $\{\ket{\psi_j} \mid v_j \in h_y\}$ is an orthonormal basis of $\mathbb{C}^4$. Hence each hyperedge $h_y$ determines a projective measurement consisting of the rank-one projectors $\mathbb{P}^{(y)}_j = \ket{\psi_j}\!\bra{\psi_j}$ for $v_j\in h_y$, with $\sum_{v_j \in h_y} \mathbb{P}^{(y)}_j = I_4$. After observing $h_y$, the receiver performs the corresponding projective measurement on the received quantum system. Since $\ket{\psi_k}$ is one of the basis vectors associated with $h_y$, the measurement outcome identifies $\ket{\psi_k}$ with certainty and thus recovers the transmitted message $m_k$ without error.

Therefore the protocol transmits all $18$ messages perfectly. It follows that the one-shot zero-error capacity of the combined use of $N_n$ and $Q_p$ is at least $18$. This achievable rate strictly exceeds the product of the individual capacities, $4\times 4=16$, demonstrating superadditivity of the zero-error capacity.
\end{proof}

\begin{figure}[t]
\centering
\includegraphics[width=\linewidth]{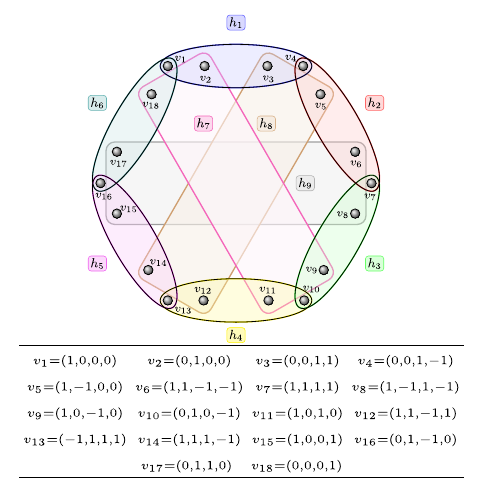}
\caption{Channel hypergraph associated with the noisy classical channel $N_n$ given in Theorem \ref{scheme}. The hypergraph has $18$ vertices, corresponding to the channel inputs, which are represented by vectors $v_i$ in $\mathbb{R}^4$ and listed in the table. There are 9 hyperedges corresponding to possible channel outputs. Each vertex is contained in exactly two hyperedges. Accordingly, for any given input, the noisy classical channel can produce either of the two associated outputs.}
\label{fig:c18}
\end{figure}

To place Theorem~\ref{scheme} in context, we compare it with prior work that also exploited quantum resources to enhance zero-error capacity. In particular, Cubitt \emph{et al.} \cite{Cubitt2010} demonstrated that entanglement assistance can result in a strict enhancement of the zero-error capacity for certain noisy classical channels.
Their analysis identifies a necessary structural condition for such an enhancement: the channel hypergraph must be Kochen Specker (KS) uncolorable, reflecting the impossibility of consistent deterministic classical value assignment.

The noisy classical channel we use to prove Theorem~\ref{scheme} has the channel hypergraph constructed from the $18$-vector Kochen Specker set of Cabello \emph{et al.} (Fig.~\ref{fig:c18}), so it is therefore KS uncolorable. Importantly, however, our protocol achieves superadditivity in cases beyond KS uncolorable hypergraphs, showing that the structural requirement identified for entanglement-assisted protocols does not carry over as a necessary condition for the effect we observe.

\begin{remark}\label{sic}
Consider a noisy classical channel $N_n$ whose hypergraph has as vertices the vectors in $\mathbb{C}^3$:
$(1,0,0)$, $(0,1,0)$, $(0,0,1)$, $(1,-\omega^i,0)$, $(1,0,-\omega^i)$, $(0,1,-\omega^i)$, $(1,\omega^i,\omega^j)$,
where $\omega = e^{2\pi i/3m}$ and $i,j\in\{1,\dots,3m\}$. For $m\in\mathbb Z_+$, this family of channels exhibits one-shot zero-error superadditivity when forward assisted by a perfect $3$-level quantum channel, despite its corresponding confusability graph being KS colorable.
However, the set is known to exhibit state-independent contextuality \cite{Xu2015,Cabello2015}. The channel construction, encoding and decoding scheme appear in the Appendix A.
\end{remark}

In light of Remark~\ref{sic}, KS uncolorability is not the structural feature underlying the effect observed here. This observation motivates a deeper structural question: what properties of the noisy classical channel enable such an enhancement, and conversely, which channels provably cannot exhibit superadditivity under this protocol. In particular, we seek to determine whether the observed advantage is governed by specific features of the confusability graph of the noisy channel.

 To address this question, we introduce a graph-theoretic framework \cite{Cabello2010,Cabello2014,Amaral2018} based on exclusivity relations between measurement events. This framework has been extensively used in the study of Kochen--Specker contextuality and provides a natural language for distinguishing classical and quantum behaviors.

Let $\mathcal{G}$ be a finite graph with vertex set $V_\G$ and edge set $E_\G$. Each vertex represents a distinct measurement event, and an edge between two vertices indicates that the corresponding events are mutually exclusive. In a probabilistic theory, each event is assigned an occurrence probability $p(v)$. A {behavior} on $\G$ is a function $p : V_\G \to [0,1]$ satisfying the pairwise exclusivity condition
$p(v) + p(v') \leq 1 \quad \text{for all } \{v,v'\} \in E_\G$.
A behavior is called \emph{classical} or \emph{noncontextual} if there exists a probability space $(\Omega,\Sigma,\mu)$ and measurable sets $A_v \in \Sigma$ such that $A_v \cap A_{v'} = \varnothing$ whenever $\{v,v'\} \in E_\G$, and $p(v) = \mu(A_v)$ for all $v \in V_\G$. We denote the set of classical behaviors by $\mathscr{C}_\G$.
A behavior is \emph{quantum} if it admits a quantum realization: there exist a Hilbert space $\mathcal{H}$, a density operator $\rho$ on $\mathcal{H}$, and projectors $\{\Pi_v\}_{v \in V(G)}$ such that $\Pi_v \Pi_{v'} = 0$ whenever $\{v,v'\} \in E(G)$ and
$p(v) = \mathrm{Tr}(\rho \Pi_v)$
for all $v \in V_\G$. The set of quantum behaviors is denoted $\mathscr{Q}_\G$. Since any classical model can be embedded into quantum theory, one always has $\mathscr{C}_\G \subseteq \mathscr{Q}_\G$. For some graphs this inclusion is strict, reflecting genuinely quantum behaviors; for others the two sets coincide, indicating that quantum resources offer no advantage within the given exclusivity structure.
In the quantum realization above, the graph $\mathcal G$ may be viewed as the
orthogonality graph induced by the family of rank one projectors. We now use this framework to derive a criterion for superadditivity based on the confusability graph associated with a channel hypergraph of the noisy classical channel.

\begin{theorem}\label{exc}
If the confusability graph $G(N_n)$ of a noisy classical channel $N_n$ is isomorphic to an exclusivity graph $\mathcal{G}$ satisfying $\mathscr{C}_\G=\mathscr{Q}_\G$, then the communication protocol considered here does not exhibit one-shot zero-error superadditivity.
\end{theorem}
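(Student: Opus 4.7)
\textit{Proof plan.} The strategy is to associate, to any zero-error code for $N_n\otimes Q_p$, a quantum behavior on the exclusivity graph $\G\cong G(N_n)$, and then to exploit the hypothesis $\mathscr{C}_\G=\mathscr{Q}_\G$ to bound the code size by $d\cdot\alpha(G(N_n))$.

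First, I would consider any zero-error scheme that transmits $M$ messages through $N_n\otimes Q_p$ and, without loss of generality, take codewords to be pure: each message $m_k$ is sent as a pair $(v_k,\ket{\psi_k})\in V(G(N_n))\times\mathbb{C}^d$. Whenever two distinct messages $m_k,m_l$ have classical labels $v_k,v_l$ lying in a common hyperedge $h_y$ of $H(N_n)$ (which includes the case $v_k=v_l$), the output $y$ is possible under either encoding, so zero-error decodability forces the receiver's POVM for outcome $y$ to distinguish $\ket{\psi_k}$ and $\ket{\psi_l}$ with certainty. This in turn forces $\ket{\psi_k}\perp\ket{\psi_l}$.

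Next, for each vertex $v\in V(G(N_n))$ I define $\Pi_v:=\sum_{k:\,v_k=v}\ket{\psi_k}\bra{\psi_k}$ and $n_v:=|\{k:v_k=v\}|$. The orthogonality just established makes $\Pi_v$ a projector of rank $n_v$, and gives $\Pi_v\Pi_{v'}=0$ whenever $v\sim v'$ in $G(N_n)$, since adjacency in the confusability graph means the two vertices co-occur in some hyperedge. Choosing $\rho:=I/d$, the assignment $p(v):=\tr(\rho\Pi_v)=n_v/d$ is, by construction, a quantum behavior in $\mathscr{Q}_\G$ and satisfies $\sum_{v\in V_\G}p(v)=M/d$. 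Invoking $\mathscr{C}_\G=\mathscr{Q}_\G$, the behavior $p$ lies in $\mathscr{C}_\G$, so it is carried by events $\{A_v\}$ with $A_v\cap A_{v'}=\varnothing$ whenever $v\sim v'$; mapping each sample-space point $\omega$ to the independent set $\{v:\omega\in A_v\}$ of $\G$ exhibits $p$ as a convex combination of indicator vectors of independent sets, whence $\sum_v p(v)\le\alpha(\G)=\alpha(G(N_n))$. Combining with $\sum_v p(v)=M/d$ yields $M\le d\cdot\alpha(G(N_n))=\mathfrak{C}_0(N_n)\cdot\mathfrak{C}_0(Q_p)$, precluding any superadditivity.

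The main obstacle I anticipate is the careful justification of the orthogonality step under fully general encodings that allow mixed-state codewords and POVM decoders whose effects may depend arbitrarily on the observed $y$. For pure states and projective decoding the conclusion is immediate; in the general case one verifies that $\tr(E_y^{(m_k)}\ket{\psi_k}\bra{\psi_k})=1$ together with $E_y^{(m_k)}+E_y^{(m_l)}\preceq I$ forces $E_y^{(m_k)}\ket{\psi_l}=0$ and hence $\braket{\psi_k|\psi_l}=0$, after invoking the standard reduction to pure-state codewords. Once this is in place, the remainder is a direct translation of the code into the exclusivity-graph formalism.
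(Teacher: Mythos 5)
Your proposal is correct, but it proves the theorem by a genuinely different (and in fact stronger) route than the paper. The paper argues structurally: it invokes the equivalence $\mathscr{C}_\G=\mathscr{Q}_\G \Leftrightarrow \mathcal G$ perfect, then Lov\'asz's characterization $\alpha(\mathcal H)\,\omega(\mathcal H)\ge |V(\mathcal H)|$ for perfect graphs, and the bound $\omega(\mathcal G)\le d$ coming from a rank-one orthogonal representation in dimension $d$, to conclude $|V(\mathcal G)|\le \alpha(\mathcal G)\,d$; since the protocol in question encodes at most one message per vertex, its rate $|V(\mathcal G)|$ cannot exceed the product $\alpha(\mathcal G)\,d$ of the individual capacities. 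You instead bound an \emph{arbitrary} zero-error code for $N_n\otimes Q_p$: zero-error decodability forces orthogonality of the quantum codewords attached to confusable (or identical) classical labels, which lets you build projectors $\Pi_v$ and, with $\rho=I/d$, a quantum behavior $p(v)=n_v/d$ on $\G$ with $\sum_v p(v)=M/d$; the hypothesis $\mathscr{C}_\G=\mathscr{Q}_\G$ then gives $\sum_v p(v)\le\alpha(\G)$, hence $M\le d\,\alpha(G(N_n))$. Your route uses the hypothesis directly (no detour through the perfect graph theorem or through $\omega\le d$), and it buys more: it yields full multiplicativity $\mathfrak{C}_0(N_n\otimes Q_p)=d\,\alpha(G(N_n))$ for any assisting dimension $d$ and any encoding/decoding, not merely the non-superadditivity of the paper's specific vertex-plus-orthogonal-representation protocol; the paper's route is shorter and purely graph-theoretic but its conclusion is tied to that protocol. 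Two small points to keep tight in a write-up: (i) your construction uses projectors $\Pi_v$ of rank $n_v>1$, which is fine under the paper's definition of $\mathscr{Q}_\G$ (general projectors), but should be said explicitly since the paper later speaks of rank-one representations; (ii) the reduction from mixed-state codewords and randomized classical labels to pure deterministic pairs should be stated (any pure state in the support of a zero-error codeword is itself zero-error decodable), which you have correctly anticipated.
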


\begin{proof}

It is known that $\mathscr{C}_\G=\mathscr{Q}_\G$ if and only if $\mathcal G$ is a perfect graph \cite{GLS,Knuth1994,Rosenfeld1967,Berge1961,Chudnovsky2006,Cabello2010,Cabello2014,Amaral2018}. For perfect graphs, the inequality
$\alpha(\mathcal H)\,\omega(\mathcal H) \ge |V(\mathcal H)|$
holds for every induced subgraph $\mathcal H$ of $\mathcal G$, and in particular for $\mathcal G$ itself \cite{perfect}. Writing $n=|V(\mathcal G)|$, this implies
$\frac{n}{\alpha(\mathcal G)} \le \omega(\mathcal G)$.

If $\mathcal G$ admits an orthogonal representation \cite{onlynote} by rank-one projectors acting on a Hilbert space of dimension $d$, then each clique of $\mathcal G$ corresponds to a set of mutually orthogonal projectors. This implies that $\omega(\mathcal G)\le d$, and consequently 
$\alpha(\mathcal G)\, d \ge n$.

In our protocol, classical messages are encoded in the vertices of $G(N_n)$, so that at most $n$ messages can be transmitted with zero-error. When the condition $\alpha(\mathcal G)\, d \ge n$ holds, one-shot zero-error superadditivity is therefore ruled out here.
\end{proof}
Consequently, one-shot zero-error superadditivity is possible only when the classical behavior set is a proper subset of the quantum set, $\mathscr{C}_\G\subsetneq \mathscr{Q}_\G$, reflecting 
contextual structure in the underlying graph.
Since Kochen–Specker contextuality can only arise in quantum systems above a certain dimensional threshold, this naturally motivates following no-go corollaries.
\begin{corollary}\label{c1}
For any noisy classical channel $N_n$, the parallel use of $N_n$ with a two-level perfect quantum channel cannot exhibit zero-error superadditivity under the protocol considered here.
\end{corollary}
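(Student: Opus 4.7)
The plan is to leverage Theorem~\ref{exc} by showing that any exclusivity graph arising in the protocol with a two-level perfect quantum channel is necessarily bipartite, hence perfect, so that $\mathscr{C}_{\mathcal G}=\mathscr{Q}_{\mathcal G}$ and superadditivity is immediately ruled out.

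First, I would recall that for the scheme of Theorem~\ref{scheme} to transmit all $n=|V(G(N_n))|$ messages perfectly, the sender must assign to every vertex $v$ of the confusability graph $G(N_n)$ a unit vector $\ket{\psi_v}\in\mathbb{C}^{d}$ such that vectors associated with vertices in a common hyperedge of $H(N_n)$ are mutually orthogonal; otherwise the projective measurement used in decoding cannot identify the transmitted input with certainty. For $d=2$, this requirement forces $G(N_n)$ to admit an orthogonal representation by rank-one projectors on $\mathbb{C}^{2}$, so the associated exclusivity graph $\mathcal G \cong G(N_n)$ has clique number $\omega(\mathcal G)\le 2$.

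Next, I would show that any such graph must in fact be bipartite. In $\mathbb{C}^{2}$ every nonzero vector has a one-dimensional orthogonal complement, so once $\ket{\psi_{v_1}}$ is fixed, any vector orthogonal to it is determined up to a scalar. Traversing a cycle $v_1 v_2 \cdots v_\ell v_1$ whose consecutive vertices are adjacent in $\mathcal G$, the associated vectors alternate between two one-dimensional subspaces, and consistency on returning to $v_1$ forces $\ell$ to be even. Hence $\mathcal G$ contains no odd cycles and is bipartite.

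Finally, bipartite graphs are perfect, so the identity $\mathscr{C}_{\mathcal G}=\mathscr{Q}_{\mathcal G}$ of Theorem~\ref{exc} applies and rules out one-shot zero-error superadditivity. One could equivalently invoke König's theorem directly: every bipartite graph on $n$ vertices has $\alpha\ge n/2$, so the number of messages transmitted obeys $n\le 2\,\alpha(G(N_n))=\mathfrak{C}_0(N_n)\cdot\mathfrak{C}_0(Q_p)$. The only step requiring genuine work is the bipartiteness argument in $\mathbb{C}^{2}$, a short linear-algebraic observation on cycles; the remainder is a direct application of the framework already developed.
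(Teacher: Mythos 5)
Your proposal is correct, but it reaches the conclusion by a different route than the paper. The paper's own proof argues physically: any exclusivity graph with an orthogonal representation in dimension $2$ admits a noncontextual hidden-variable model (Kochen--Specker contextuality needs dimension at least $3$), so the protocol would require encoding states in a Hilbert space of dimension greater than $2$, and a two-level perfect quantum channel $\mathcal{I}_2$ simply cannot transmit such a system intact. You instead stay inside the graph-theoretic framework of Theorem~\ref{exc}: the dimension-$2$ constraint forces the exclusivity graph to be bipartite (your alternation-of-orthocomplements argument on cycles is sound, since in $\mathbb{C}^2$ the orthogonal complement of a line is a unique line, so odd cycles are impossible), bipartite graphs are perfect, hence $\mathscr{C}_{\mathcal G}=\mathscr{Q}_{\mathcal G}$ and Theorem~\ref{exc} applies; your direct alternative $n\le 2\,\alpha(G(N_n))$ from the bipartition gives the same bound quantitatively (note that $\alpha\ge n/2$ needs only the bipartition itself, not K\"onig's theorem). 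What each approach buys: yours is more self-contained and yields an explicit counting bound, and it extends immediately to codeword subsets because induced subgraphs of dimension-$2$-representable graphs are again dimension-$2$-representable; the paper's version is shorter and makes the physical obstruction transparent, namely that the assisting channel is too small to carry any state set capable of exhibiting contextuality. Both arguments share the paper's implicit assumption that messages are encoded in vertices of the confusability graph, so neither is less rigorous than the other on that point.
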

\begin{corollary}\label{c2}
Let $N_n$ be a noisy classical channel whose confusability graph admits an orthogonal representation in $\mathbb{C}^d$, $d$ being the minimum possible dimension. If the assisting perfect quantum channel has dimension strictly less than $d$, then no one-shot zero-error superadditivity is possible under this protocol.
\end{corollary}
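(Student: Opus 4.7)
The plan is to argue that the protocol of Theorem~\ref{scheme} cannot be instantiated at all when the quantum channel's dimension falls below the orthogonal rank of $G(N_n)$, which immediately rules out any superadditive effect.

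The first step is to isolate what the protocol demands of the quantum channel. The encoding fixes a bijection between the messages and $V(G(N_n))$, assigning each vertex $v$ a quantum state $\ket{\psi_v}\in\mathcal{H}$, where $\mathcal{H}$ is the Hilbert space of $Q_p$. On receiving the hyperedge output $h_y$, the receiver performs the projective measurement in the basis $\{\ket{\psi_v}:v\in h_y\}$; zero-error decoding demands that these states be mutually orthogonal for every hyperedge. Since each edge of $G(N_n)$ is contained in at least one hyperedge, $\ket{\psi_u}\perp\ket{\psi_v}$ is forced whenever $u$ and $v$ are adjacent, so $v\mapsto\ket{\psi_v}$ is precisely an orthogonal representation of $G(N_n)$ in $\mathcal{H}$. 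Invoking the minimality hypothesis on $d$, no such representation fits in $\mathbb{C}^{d'}$ when $d'<d$; hence the encoding step cannot be executed, and the protocol fails to transmit the required $n=|V(G(N_n))|$ messages with zero error --- let alone achieve a message count exceeding $\alpha(G(N_n))\cdot d'$.

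The main obstacle is interpretive rather than technical: one must confirm that \emph{this protocol} refers to the full-bijection encoding of Theorem~\ref{scheme}, not some partial variant that encodes only a subset $S\subsetneq V(G(N_n))$. Bounding $|S|$ for such a partial variant when $G[S]$ need not be perfect would require machinery beyond Theorem~\ref{exc}, since an induced subgraph of orthogonal rank at most $d'$ need not satisfy $|V(G[S])|\le\alpha(G[S])\cdot d'$ in the absence of perfection. Under the literal reading of the protocol statement, however, the dimensional obstruction above is already enough to yield the corollary.
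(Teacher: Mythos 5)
Your proposal is correct and follows essentially the same route as the paper: both arguments reduce to the dimensional obstruction that the protocol forces the vertex states to form an orthogonal representation of $G(N_n)$, which by minimality of $d$ cannot exist in (or be carried intact by) a quantum channel of dimension $d'<d$, so the protocol cannot succeed and no superadditivity can arise. The paper phrases the failure at the transmission/decoding stage (a $(d-1)$-level channel cannot convey a $d$-dimensional system intact) while you phrase it at the encoding stage, but this is the same observation; your closing caveat about partial-encoding variants applies equally to the paper's own proof, which likewise relies on the literal reading of the protocol.
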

The proofs of the above corollaries are provided in the Appendix B.

\textit{Discussion.-}
We have explored a new kind of superadditivity in the zero-error communication scenario that is fundamentally quantum in origin. This quantum character manifests in an asymmetric communication setting where a noisy classical channel is assisted by a perfect quantum channel in parallel. In particular, we show that the one-shot zero-error classical capacity of the combined channel exceeds the product of their individual zero-error capacities, and that this enhancement disappears when the assisting quantum channel is replaced by a perfect classical channel, confirming that the effect is genuinely quantum. We further show that the superadditivity observed in this scheme is governed by the structural properties of the noisy classical channel. Specifically, superadditivity is possible only when the confusability graph of the classical channel is isomorphic to an exclusivity graph that admits strictly  quantum behavior. A sufficient condition for superadditivity in our framework is straightforward: if the number of vertices of the channel hypergraph exceeds the product of the independence number of its confusability graph and the dimension of an orthogonal representation of the graph, then the channel can be paired with a perfect quantum channel of matching dimension to exhibit superadditivity of one-shot zero-error classical capacity. Unlike earlier superadditivity results involving combinations of noisy and perfect quantum channels \cite{Duan2009,Cubitt2011a,Cubitt2012,Shirokov2015,Duan2016}, the effect reported here arises from purely noisy classical channel assisted by a perfect quantum channel.

Enhancement of the zero-error capacity of classical channels using shared entanglement was previously explored by Cubitt \textit{et al.} \cite{Cubitt2010}. 
{Subsequent analyses by Yadavalli and Kunjwal \cite{Yadavalli2022} identified an important limitation of that protocol: capacity enhancement is possible only when the channel hypergraph arises from a \emph{disjoint} Kochen–Specker basis set. Consequently, the Cubitt \textit{et al.} scheme offers no advantage for channels whose hypergraphs do not have this structure, including those associated with the Cabello \textit{et al.} construction \cite{CABELLO1996183} or other state-independent contextuality sets that remain KS-colorable. Our protocol overcomes this limitation.}
In particular, we demonstrate superadditivity for noisy classical channels whose hypergraphs coincide with those arising from the construction of Cabello \textit{et al.}, as well as for a family of channels constructed from a state-independent contextuality set that is nevertheless KS-colorable \cite{Xu2015}. Here, we also showed that a three-level quantum channel is the minimal assisting resource capable of producing this superadditivity.

This work suggests several directions for future study. Although Kochen–Specker contextuality is necessary for an orthogonality graph to support superadditivity when interpreted as a confusability graph, it is not sufficient within our present framework. Developing quantum schemes that realize superadditivity for more general contextual graphs remains open. Furthermore, in our construction the number of perfectly transmitted messages is bounded by the size of vertex set of the confusability graph. Whether this bound can be surpassed remains an open question.

\begin{acknowledgments}
\textbf{Acknowledgements.} We are especially grateful to Manik Banik for numerous stimulating and insightful discussions that significantly shaped various aspects of this work. We also thank Guruprasad Kar, Sibasish Ghosh, Mir Alimuddin, Chandan Datta, Sahil G. Naik, Subhendu B. Ghosh, Snehasish Roy Chowdhury, Ram Krishna Patra, Anandamay Das Bhowmik, Tamal Guha, Pratik Ghosal, Chitrak Roychowdhury, Tushar for insightful discussions at various stages of this work. KA acknowledges support from the CSIR project 09/0575(19300)/2024-EMR-I. AM thanks Sumit Mukherjee for insightful discussions on contextuality. AM acknowledges the EntangleX Dialogues YouTube channel for organizing a series of quantum information seminars, in particular the ones organized around Guruprasad Kar’s talk series on “Physics of Distilling Secrecy and Entanglement”.
\end{acknowledgments}




\appendix

\section{Appendix A: Encoding and decoding scheme for Remark \ref{sic}}
\label{app:superadditivity}

In this appendix we provide an explicit example underlying Remark 1. The construction is based on a family of vectors in $\mathbb{C}^3$ whose orthogonality relations admit a natural interpretation as a classical channel confusability structure. We show how this structure leads to one shot zero error superadditivity when a noisy classical channel is used in parallel with a perfect three level quantum channel.

\paragraph{A family of vectors in $\mathbb{C}^3$.}
We begin by introducing the vector set that will generate the required orthogonality relations. Fix an integer parameter $m\ge 1$ and define $\omega = e^{2\pi i/(3m)}$. Consider the following family of vectors in $\mathbb{C}^3$:
$
(1,0,0),\; (0,1,0),\; (0,0,1),
$
$
(1,-\omega^i,0),\; (1,0,-\omega^i),\; (0,1,-\omega^i),\; (1,\omega^i,\omega^j),
$
where $i,j\in\{1,\dots,3m\}$. Let $\mathcal{V}_m$ denote the set of all such vectors. A direct count shows that
$
|\mathcal{V}_m| = 3 + 9m + 9m^2.
$

The orthogonality relations among these vectors play a central role. They induce a graph structure in which vertices correspond to elements of $\mathcal{V}_m$ and edges connect pairs of orthogonal vectors.

\paragraph{Orthogonality graph and contextuality properties.}
Let $\mathcal{G}_m$ be the orthogonality graph associated with $\mathcal{V}_m$, whose vertices are the vectors in $\mathcal{V}_m$ and where two vertices are adjacent if and only if the corresponding vectors are orthogonal. For a vertex $v\in\mathcal{V}_m$ with vector $(a,b,c)$ we associate the normalized qutrit state
$
\ket{\psi_v}=\tfrac{a\ket{0}+b\ket{1}+c\ket{2}}{\|(a,b,c)\|}\in\mathbb{C}^3,
$
and the corresponding rank one projector $\Pi_v=\ket{\psi_v}\!\bra{\psi_v}$. Adjacency of two vertices $v$ and $w$ therefore implies $\Pi_v\Pi_w=0$. The resulting family of projectors is known to constitute a state independent contextuality set in Hilbert space of dimension three, while remaining $\{0,1\}$-colorable \cite{Xu2015}.

\paragraph{Construction of a noisy classical channel.}
Here we provide an explicit construction of the family of channels $N_n^m$, whose input alphabet has cardinality $3 + 3k + 3k^2.$
It is convenient to describe the channel in terms of its channel hypergraph, where each hyperedge corresponds to a set of input symbols that may produce the same channel output. By construction, every hyperedge consists of vertices whose associated vectors are mutually orthogonal, and hence each hyperedge contains either two or three vertices, depending on the orthogonality relations in $\mathcal{V}_m$. The hypergraph has 
$3 + 3k + 3k^2$
vertices, where 
$k = 3m$, $m \in \mathbb{Z}_{+},$
and the channel outputs are identified with the hyperedges of this hypergraph. When $m$ is odd, the number of three-vertex hyperedges is 
$\frac{k^2}{3} + 1,$
and the number of two-vertex hyperedges is 
$3k^2 + 3k.$
When $m$ is even, the number of three-vertex hyperedges is 
$\frac{k^2}{3} + \frac{3k}{2} + 1,$
and the number of two-vertex hyperedges is 
$3k^2.$ 
We interpret $\mathcal{G}_m$ as the confusability graph associated with this channel hypergraph; both the confusability graph and the channel hypergraph share the same vertex set and therefore have the same number of vertices.

\paragraph{Capacity parameters.}
The one shot zero error classical capacity of $N_n^m$ is given by the independence number of its confusability graph, namely $\alpha(\mathcal{G}_m)$ which is known and given by
$
\alpha(\mathcal{G}_m) = 3m(m+1)$.
 Since the vectors live in $\mathbb{C}^3$, we denote by $d=3$ the local Hilbert space dimension of the quantum system that will be transmitted through a $3$-level perfect quantum channel for assistance. A simple comparison shows that
$
|V(\mathcal{G}_m)|> \alpha(\mathcal{G}_m)\, d.
$
This numerical separation is the key to the superadditivity phenomenon exhibited below.

\paragraph{Proof of Remark 1.}
We now describe an explicit encoding and decoding protocol for the joint use of the two channels that achieves a one shot zero error capacity strictly exceeding the product of the individual one shot zero error capacities.

\paragraph{Encoding.}
The sender wishes to transmit a message drawn from the message set $\mathcal{M}_m$ with $|\mathcal{M}_m| = |V(\mathcal{G}_m)|$. Fix a bijective encoding map $\mathscr{E}:\mathcal{M}_m\to V(\mathcal{G}_m)$. To transmit a message $u\in\mathcal{M}_m$, the sender proceeds as follows. First, the classical label corresponding to $v=\mathscr{E}(u)$ is sent through the noisy classical channel $N_n^m$. Concurrently the sender transmits the qutrit state
$
\ket{\phi_v}=\ket{\psi_v}
$
through a perfect three level quantum channel, where $\ket{\psi_v}$ is the normalized state associated with the vector that labels $v$.

\paragraph{Channel output.}
By construction of $N_n^m$, the classical channel output is a hyperedge $h\subseteq V(\mathcal{G}_m)$ such that $v\in h$. Through the quantum channel the receiver obtains the corresponding qutrit state $\ket{\phi_v}$, whose identity is a priori unknown to the receiver.

\paragraph{Decoding.}
Upon receiving the hyperedge $h$ and the quantum state $\ket{\phi_v}$, the receiver implements the following projective measurement determined by $h$. If $h$ contains three vertices, the receiver measures in the orthonormal basis formed by the three corresponding qutrit states; the three rank one projectors constitute the measurement. If $h$ contains two vertices, let $\Pi_1$ and $\Pi_2$ denote the two rank one orthogonal projectors onto the corresponding qutrit states; the measurement is completed by the unique third projector
$
\Pi_3 = \mathbb{I}_3 - (\Pi_1 + \Pi_2).
$
In either case $\ket{\phi_v}$ is one of the measurement basis elements, so the measurement outcome identifies $v$ with certainty, and hence it maps the identified vertex back $v$ to the message $u$.

\paragraph{Superadditivity.}
The protocol succeeds with zero probability of error for every message. Hence the one shot zero error classical capacity of the joint use of $N_n^m$ and the perfect three level quantum channel is at least $|V(\mathcal{G}_m)|$. The individual one shot zero error classical capacities are $\alpha(\mathcal{G}_m)$ for $N_n^m$ and $d=3$ for the perfect three level quantum channel, so the product of the individual capacities equals $3\alpha(\mathcal{G}_m)$. Using $\alpha(\mathcal{G}_m)=3m(m+1)$ and $|V(\mathcal{G}_m)|=3+9m+9m^2$ we obtain
$
|V(\mathcal{G}_m)| = 3\alpha(\mathcal{G}_m) + 3 \;>\; 3\alpha(\mathcal{G}_m)
$ for all $m\in\mathbb{Z}_+$.
Therefore the capacity achieved by the joint channel strictly exceeds the product of the individual capacities. This completes the proof of Remark 1 and provides an explicit family of noisy classical channels that exhibit the superadditivity phenomenon when used in parallel with a perfect three level quantum channel, in accordance with Theorem 1. The underlying construction of the channel confusability graph is based on a state independent contextuality graph that nevertheless admits a KS coloring.

\section{Appendix B: Proofs of Corollary 1 and 2} \label{app:corollary}

\textbf{Corollary 1.} For any noisy classical channel $N_n$, the parallel use of $N_n$ with a two-level perfect quantum channel cannot exhibit zero-error superadditivity under the protocol considered here.

\begin{proof}
    For projective measurements, any exclusivity graph that admits an orthogonal representation in dimension 2 has a noncontextual hidden variable model. Consequently, the corresponding quantum behavior set coincides with the classical behavior set. Therefore, in our protocol, the quantum states corresponding to each vertex must belong to a Hilbert space of dimension greater than 2. 

    Now, if the classical channel is assisted by a $2$-level perfect quantum channel defined by the map $\mathcal{I}_2:\mathcal{D}(\mathbb{C}^2)\mapsto\mathcal{D}(\mathbb{C}^2)$ with $\mathcal{I}_2({\rho})=\rho,$ for all $\rho\in \mathcal{D}(\mathbb{C}^2)$, then a three or higher dimensional quantum system can not be transmitted intact through this channel from the sender to the receiver. 
    This completes the proof. 
\end{proof}

\textbf{Corollary 2.} Let $N_n$ be a noisy classical channel whose confusability graph admits an orthogonal representation in $\mathbb{C}^d$, $d$ being the minimum possible dimension. If the assisting perfect quantum channel has dimension strictly less than $d$, then no one-shot zero-error superadditivity is possible under this protocol.
\begin{proof}
    The proof of this corollary is similar to that of Corollary 1. A $(d-1)$-level perfect quantum channel cannot transmit a $d$-dimensional quantum system intact. Therefore, perfect decoding of the transmitted message is not possible.
\end{proof}



\end{document}